\newtheorem{lemma}{Lemma}
\def\M{\mathcal{M}}
\def\R{\mathbb{R}}
\def\ones{\mathbf{1}}
\def\zeros{\mathbf{0}}
  \thanks[fn1]{BIOGECO, INRA, Univ. Bordeaux, 33610 Cestas, France}%
  \thanks[fn2]{Pleiade team - INRIA Bordeaux-Sud-Ouest, France}%
   \thanks[fn3]{HiePACS team, Inria Bordeaux-Sud-Ouest, France}%
\begin{document}
\makeRR   

\newpage
\tableofcontents

\vspace*{2cm}

\section{Introduction}

Let us have a set $V$ of $n$ objects, and distances between every pair of them. Distance between object $i$ and $j$ is denoted $d(i,j)$, or $d_{ij}$. This defines a metric space $(V,d)$ with $V = \llbracket 1,n \rrbracket$. Among metric spaces, Euclidean spaces play a special role, because they establish a link with geometry. Moreover, the geometry of Euclidean spaces is very well known. Hence, even if many other metric spaces exist, like Riemanian manifolds with distances along a geodesic or graphs with shortest distance between vertices, many efforts have been devoted to specify those metric spaces for which an isometry exists with a Euclidean space, and if it exists, to build it. In such a case, any metric problem in $V$ can be translated into a problem in Euclidean geometry, and, when lucky, solved. For example, supervised learning by discriminant analysis in a discrete metric space can be translated into the same problem in a Euclidean space and solved by Support Vector Machine approaches (see \cite{Cristianini2000}).\\
\\
The conditions for existence of an isometry between a discrete metric space and a subset of a Euclidean space are known, and is given by classical multidimensional scaling, proposed in \cite{Torgerson1952} (see \cite{Mardia1979,Cox2001} for classical presentation of MDS, and \cite{Izenman2008} for a recent presentation). If there is an isometry $i \, : \, i \longmapsto x_i$ between $(V,d)$ and a subset of $n$ points in a Euclidean space $\R^k$, then the Gram matrix of vectors $(x_i)_i$ is definite positive. It appears that the Gram matrix can be computed from pairwise distances only. A set of points $X=(x_i)_i$ such that
\begin{equation}
 \forall \: i,j \in V, \quad \|x_i-x_j\|=d_{ij}
\end{equation}
can be computed from the Singular Value Decomposition of the Gram matrix as a second step. If the Gram matrix has non positive eigenvalues\footnote{In such a case, strictly speaking, the matrix built from the pairwise distances is not a Gram matrix.}, there is no isometry on any subset of $\ell^2$, regardless of  the dimension. In such a case, for a given dimension $k$, one defines the cost of a map $i \longmapsto x_i$ by 
\begin{equation}
 \phi(X) = \sum_{i < j}\left(\|x_i-x_j\|-d_{ij}\right)^2.
\end{equation}

\noindent If there is an isometry between $(V,d)$ and a subset of $\R^k$, there is a map for which $\phi=0$. If not, Least Square Scaling (LSS) is finding a map with minimal cost for a given dimension $k$, i.e. solving\\
\begin{center}
\begin{tabular}{|ll}
 $\quad$ Given & a discrete metric space $(V,d)$ \\
 & a dimension $k$ \\
 $\quad$ find & a map $i \in V \longmapsto x_i \in \R^k$ \\
 $\quad$ such that & $\phi$ is minimal
\end{tabular}
\end{center}
Least square scaling has been pioneered in \cite{Kruskal1964a}. See as well \cite{Cox2001} for a presentation and a comparison with classical MDS\footnote{It is unfortunate that least-square scaling has been proposed with the same name MDS than classical MDS. However, classical texts such as \cite{Mardia1979,Cox2001,Izenman2008} are clear on this matter and agree on setting the vocabulary.}.\\
\\
In some situations, one is interested in the relative erreor/distorsion between the distances $d_{ij}$ and $\|x_i-x_j\|$. Therefore, Sammon has developed in \cite{Sammon69} what he called Non Linear Mapping (NLM) in which each term in the cost function is weighted by the inverse of the distance: 
\begin{equation}
 \phi(X) = \sum_{i < j}\frac{\left(\|x_i-x_j\|-d_{ij}\right)^2}{d_{ij}}
\end{equation}
(Sammon introduces a normalizing constant $c$, that we do not mention here). It is natural to extend this towards
\begin{equation}\label{eq:costfunc}
 \phi(X) = \sum_{i < j}\omega(d_{ij})\left(\|x_i-x_j\|-d_{ij}\right)^2
\end{equation}
where $\omega \; : \; d \longmapsto \omega(d)$ is a weight function, which can be $d^{-1}$ as in Sammon's seminal paper, or other classical maps as $\exp - \beta d$ or $d^{-z}$. Nonlinear mapping is minimizing the cost function over all possible point sets. It is clear that a solution is not unique, and is given up to an isometry in a Euclidean space, i.e. the composition of a translation, a rotation and a reflection.\\
\\
Let us now consider a slightly different situation, where there exists an unknown point set $V$ of $n$ points in a Euclidean space\footnote{Here, it is known as part of the problem that the distances are taken between points living in a Euclidean space, whereas in NLM, such an hypothesis is not required.} $\R^k$ and where distances are known for a subset only of pairs of points. Let us consider the graph $G=(V,E)$ where $E$ is the set of pairs $(i,j) \in V^2$ for which $d(i,j)$ is known. The aim is to find a mapping 
\begin{equation}
\begin{tikzcd}
 x \: : \: V \arrow[r] & \R^k \\
 i \arrow[r, mapsto] & x_i
 \end{tikzcd}
\end{equation}
such that
\begin{equation}\label{eq:dgp}
 \forall \: (i,j) \in E, \quad \|x_i-x_j\|=d_{ij}.
\end{equation}
This is known as Distance Geometry Problem (DGP, see \cite{LLMM14}, equation $(1.1)$). A recent and thorough survey of this problem with historical background on how it grew over decades and different guises is \cite{LLMM14}. See also \cite{Mucherino2013}. Here, the dimension $k$ is given, and the distances are known accurately. In real world problems, such as determination of protein structures, the distances 	are known up to a given precision only. DGP has been studied as an $k-$embeddability problem for graphs. It has been proved in \cite{Saxe1979} that $1-$embeddability problem is NP-complete and $k-$embeddability problem is NP-hard for $k>1$. \\
\\
A link between both problems has been established in \cite{More1997}, where DGP problem is recast as finding a map
\[
\begin{tikzcd}
 x \: : \: V \arrow[r] & \R^k \\
 i \arrow[r, mapsto] & x_i
 \end{tikzcd}
\]
such that
\begin{equation}\label{eq:dgsol}
\phi(X) = \sum_{i,j \in S} \omega_{ij}\left(\|x_i-x_j\|^2-d_{ij}^2\right)^2
\end{equation}
is minimal, where $X$ is the $n \times k$ matrix with $x_i$ in row $i$ and $\omega_{ij}$ are weights. Let us note different choices for the exponents of the quantities to be compared: $\left(\|x_i-x_j\|^2-d_{ij}^2\right)^2$ in equation (\ref{eq:dgsol}) and $\left(\|x_i-x_j\|-d_{ij}\right)^2$ in equation (\ref{eq:costfunc}). Clearly, if DGP has a solution, the minimum of $\phi$ is zero, and a set $(x_1, \ldots,x_n)$ of rows of $X$ where $\phi(X) = 0$  is a solution of the DGP problem. A known difficulty is that $\phi$ has in general many local minima. The technique used in \cite{More1997} is to progressively smooth  $\phi$ by a convolution (see \cite[section 3.2.2]{LLMM14} for a general presentation of smoothing-based methods and of DGSOL which implements it).\\
\\
If all weights are equal to 1 in (\ref{eq:costfunc}), one recovers least-square scaling (see \cite{Cox2001}). If one has 
\[
 \left\{ 
    \begin{array}{lcl}
     (i,j) \in E & \quad \Rightarrow \quad & \omega_{ij} = 1 \\
     (i,j) \notin E & \quad \Rightarrow \quad  & \omega_{ij} = 0
    \end{array}
 \right. 
\]
one recovers DGP. There is a sort of continuity between least-square scaling, nonlinear mapping and DGP when the weights vary smoothly from 1 to 0 on pairs of items outside $E$.\\
\\
Here, we study whether this continuity can be given a sound basis in a relevant topology, and whether it can be translated into a continuity of numerical solutions between NLM and DGP when one or a set of parameters vary.

%
\section{Continuity between LSS, NLM and DGP}\label{sec:continuity}
%

Let $(V,d)$ be a discrete metric space, with $|V|=n$. We denote $d(i,j)=d_{ij}$ with $i,j \in V$. Let $\omega \, : \, d \longmapsto \omega(d) \geq 0$ be a weight function on distances. $(V,d)$ being known, this yields a $n \times n$ weight matrix $\Omega$ of general term $\omega_{ij}$ with $\omega_{ij} = \omega(d_{ij})$ enabling to run a nonlinear mapping of $(V,d)$ in a Euclidean space $\R^k$, where $x_i \in \R^k$ is the image of $i \in V$. The set of points $(x_1,\ldots,x_n)$ is denoted as a $n \times k$ matrix $X$, with $x_i$ being row $i$. The set of real $n \times k$ matrices is denoted $\M(n,k)$. We define
\begin{equation}
  \phi(X,\Omega) = \sum_{i,j=1}^n\omega_{ij}\left(\|x_i-x_j\|^2-d_{ij}^2\right)^2
\end{equation}
and consider the NLM problem 
\begin{equation}\label{pb:nlm:w}
 \left| 
    \begin{tabular}{ll}
     Given & a metric space $(V,d)$ \\
     & a weight matrix $\Omega$ \\
     & a dimension $k$ \\
     find & a point cloud $X \in \M(n,k)$ \\
     such that & $\phi(X,\Omega)$ is minimal
    \end{tabular}
 \right.
\end{equation}
conditioned by $\Omega$. This problem encompasses the problems mentioned above. For example, one can see that
\begin{itemize}
 \item if $\Omega=\ones_n$, i.e. the $n \times n$ matrix with ones only, or $\omega(d)=1 \: \forall d$, problem (\ref{pb:nlm:w}) is least-square scaling (see \cite{Cox2001})
 \item if $\Omega=\zeros_n$, i.e. the $n \times n$ matrix with zeros only, or $\omega(d)=0 \: \forall d$, any point cloud $X$ is a solution
 \item if $\Omega$ is a symmetric boolean matrix, i.e. $\omega_{ij} \in \{0,1\}$, problem (\ref{pb:nlm:w}) is DGP. 
\end{itemize}

\noindent There can be various situations for which considering weighted distances is a useful approach. For example, a formulation with general weights has been recently used in \cite{Mucherino2017}. In some application domains, weights may come from an evaluation of the accuracy of some measurements, as in \cite{Glunt1993} for purpose of NMR applications. Here, our approach is more abstract as we do not consider any privileged application domain, and weights are merely considered as a way to connect by continuity problems known with different names.\\
\\
Let $G=(V,E)$ be the graph such that $(i,j) \in E$ if $\omega_{ij}=1$. Then, in DGP vocabulary, $X$ is a realization of $G$. (see \cite[sec.~1.1.4]{LLMM14}). Let us mention that here and in the following  $\omega(0)$ is undetermined, and one can select $\omega(0)=0$ or $\omega(0)=1$. This can be summarized as follows:
\begin{center}
\begin{tabular}{l|l|l}
$\Omega$ & minimum of $\phi$ $\geq 0$& minimum of $\phi$ $= 0$\\
\hline
$\omega(d_{ij})=1$ & Least square scaling & Isometry with a Euclidean space \\
& & = Classical MDS \\
$\omega(d_{ij}) \geq 0$ & Nonlinear mapping & \\
$\omega(d_{ij})= 1$ if $(i,j) \in E$ & & Distance Geometry Problem \\
\hline
\end{tabular}
\end{center}
This raises the question of the continuity of the solutions of (\ref{pb:nlm:w}) depending on $\Omega$. Definition of continuity is not straightforward as there is a set of matrices $X$ which are solution of (\ref{pb:nlm:w}). We begin by an intuitive notion of continuity, and make it rigorous in section \ref{sec:topology}. A solution is said continuous at $\Omega$ if, $X$ being a solution at $\Omega$, whatever the neighborhood $\mathcal{N}_\textsc{x}$ of $X$, there exists a neighborhood of $\Omega$ such that for each $\Omega'$ in it there is a solution in $\mathcal{N}_\textsc{x}$, or\footnote{Rigorously, one should replace $|X'-X|<\epsilon$ by: there exists an $X'$ in the set of solutions for $\Omega'$ such that $|X'-X|<\epsilon$.}
\[
 \forall \: \epsilon > 0, \quad \exists \: \eta>0 \: : \: |\Omega' - \Omega|< \eta \quad \Longrightarrow \quad |X' - X| < \epsilon.
\]
One can observe that the solution is not continuous at $\Omega=\zeros_n$. Indeed, any point cloud $X'$ is a solution for $\Omega=\zeros_n$. Let us take an $\Omega \neq \zeros_n$ which has a "nice" solution, i.e. the set of solutions is an orbit of the group of isometries acting on $\R^k$, and denote $X$ a solution. Whatever $\eta > 0$, 
\[
 \phi(X, \eta \Omega)= \eta \phi(X, \Omega)
\]
and for any $\eta > 0$, the set of solutions for $\eta\Omega$ still is the set solutions for $\Omega$. Let us take a point cloud $X'$ distant from any $X$ in this set of solution, i.e. $|X'-X|> C$ for some $C>0$ whatever the point cloud $X$ in the set of solutions for $\Omega$. $X'$ is a solution for $\eta = 0$. Whatever the neighborhood of $\zeros_n$, there is a $\eta$ such that $\eta\Omega$ is in this neighborhood. And, for this $\eta\Omega$, there is no point cloud in the neighborhood of $X'$ which is a solution for $\eta\Omega$. Then, the solution of (\ref{pb:nlm:w}) is not continuous at $\Omega=\zeros_n$. Let us note that $X$ and $X'$ are incongruent in the sense of \cite[sect. 1.1.4.3]{LLMM14}, as a congruency class is an orbit of the action of the group of isometries in $\R^k$. \\
\\
Our objective is to study the continuity of the solution of (\ref{pb:nlm:w}), when $\Omega$ varies within the space of matrices with non negative elements. The motivation for this is that each problem has a specific approach to build a solution:
\begin{itemize}
 \item DGP builds it explicitly (often with an optimization scheme)
 \item LSS or NLM uses an optimization scheme
\end{itemize}
We study here whether a continuity between solutions of NLM when a parameter varies in a family $(\Omega_a)_a$ and the limit corresponds to a DGP problem (some distances have zero weight) can lead to a numerical solution of DGP. On the other hand, efficient optimization schemes have been derived for solving DGP which are close to some used for NLM (DC programming, see \cite{LeThi2013} and references 6 \& 7 therein). 

%
\section{A topology on the set of solutions}\label{sec:topology}
%

Let us define
\[
 \psi(\Omega) = \{X \in \M(n,k) \: : \: \phi(X,\Omega) \quad \mbox{is minimal}\}.
\]
For a given $\Omega$, if $m = \min_X \{\phi(X,\Omega)\}$, we have
\[
 \psi(\Omega) = \phi^{-1}(m).
\]
We will define a topology in two spaces, in which: $(i)$ an element is a point cloud $X \in \psi(\Omega)$ and $(ii)$ an element is a set of point clouds. The latter will enable to define the neighborhood of a set $\psi(\Omega)$ for a given $\Omega$ and study continuity of $\psi$. If the topology is associated to a distance, this will read
\begin{equation*}
\forall \: \epsilon >0, \quad \exists \: \eta > 0 \: : \: |\Omega'-\Omega|< \eta \quad \Longrightarrow \quad d(\psi(\Omega'),d(\Omega)) < \epsilon.
\end{equation*}
For this, we must define a distance $d(\psi(\Omega'),d(\Omega))$. A distance can be defined between compact subsets of a metric set: the Hausdorff distance (see e.g. \cite[p. 34]{Krantz2008}). Let us briefly recall its definition. Let $A,B$ be two closed sets in a metric space where the distance between points is denoted $d$. One defines
\[
 \delta(A,B)= \max_{x \in A} \: \left\{\min_{y \in B} \: d(x,y)\right\}.
\]
It is easy to show that $\delta(A,B)=0 \: \Leftrightarrow \: A=B$. But is is not symmetric. So one defines
\[
 d_\textsc{h}(A,B) = \max\: \{\delta(A,B), \delta(B,A\}.
\]
This is defined if $A$ and $B$ are compact subsets, and the triangular identity is fulfilled. $d_\textsc{h}$ is the Hausdorff distance between $A$ and $B$. We first define a distance between two points clouds $X$ and $X'$ as the Hausdorff distance between them. For this to hold, $X$ and $X'$ must be compact. As they are obviously closed, they must be bounded. Therefore, we show a first lemma which gives the condition under which $X \in \psi(\Omega)$ is bounded. Let $G=(V,E)$ be the graph associated to $\Omega$ and defined by
\[
 (i,j) \in E \quad \Longleftrightarrow \quad \omega_{ij} > 0.
\]
Then
\begin{lemma}
The set $X \in \psi(\Omega)$ is bounded if, and only if, $G$ is connected.
\end{lemma}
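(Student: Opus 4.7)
The plan begins with a sanity check that isn't discussed in the excerpt but is essential: the cost $\phi(X,\Omega)$ depends on $X$ only through the pairwise differences $x_i-x_j$, so it is invariant under the simultaneous translation $X \mapsto X + \ones_n v^{\top}$ for any $v\in\R^k$. Consequently $\psi(\Omega)$ is, strictly speaking, never bounded in $\M(n,k)$. Following the convention announced in the abstract (``after centering''), I would read the lemma as a statement about the centered slice
\[
 \psi_0(\Omega) = \psi(\Omega) \cap \{X \in \M(n,k) \: : \: \ones_n^{\top} X = 0\},
\]
which meets each translation orbit in exactly one point, and argue both implications for $\psi_0(\Omega)$.

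For the direction ``$G$ connected $\Rightarrow \psi_0(\Omega)$ bounded'', set $m = \min_X \phi(X,\Omega)$ and fix an arbitrary $X \in \psi_0(\Omega)$. Since each summand in $\phi(X,\Omega)$ is non\-negative, for every edge $(i,j) \in E$ I get
\[
 \omega_{ij}\bigl(\|x_i-x_j\|^2 - d_{ij}^2\bigr)^2 \leq m,
\]
which forces $\|x_i-x_j\|$ into a compact interval depending only on $\omega_{ij}$, $d_{ij}$ and $m$. Connectedness of $G$ then lets me bound $\|x_i-x_j\|$ for \emph{every} pair $(i,j)$ by summing these edge-bounds along a path from $i$ to $j$ in $G$ via the triangle inequality. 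Finally, the centering identity $x_i = n^{-1}\sum_j (x_i - x_j)$ converts the diameter bound into a uniform bound on $\|x_i\|$, so $\psi_0(\Omega)$ is bounded.

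For the converse ``$G$ disconnected $\Rightarrow \psi_0(\Omega)$ unbounded'', decompose $V = V_1 \sqcup \cdots \sqcup V_r$ into its $r \geq 2$ connected components. Given any $X \in \psi_0(\Omega)$ and vectors $v_1,\ldots,v_r \in \R^k$, define $X'$ by adding $v_c$ to every row of $X$ indexed by $V_c$. For each pair $(i,j)$ with $\omega_{ij}>0$, $i$ and $j$ lie in a common component $V_c$, so $x'_i - x'_j = x_i - x_j$ and the cost is unchanged: $X' \in \psi(\Omega)$. Restricting the $v_c$'s to the linear hyperplane $\sum_c |V_c|\,v_c = 0$ preserves centering, and since $r \geq 2$ that hyperplane is non-trivial in $(\R^k)^r$; taking $\|v_1\| \to \infty$ along it exhibits centered solutions of arbitrarily large norm.

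The only genuine conceptual step is identifying and disposing of the translation ambiguity at the outset; once that is done, the ``if'' direction is a short path-and-triangle-inequality argument, and the ``only if'' direction simply produces an explicit non-compact family of admissible block translations.
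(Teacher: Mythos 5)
Your proof is correct and follows essentially the same route as the paper's: edge-wise bounds obtained from the minimality of $\phi$ combined with a path/triangle-inequality argument for the connected direction, and cost-preserving component-wise translations for the disconnected direction. The only difference is one of care rather than substance: you make explicit the restriction to the centered slice and the identity $x_i = n^{-1}\sum_j (x_i - x_j)$, whereas the paper invokes the centering condition $\sum_i x_i = 0$ in both directions of its proof without flagging it in the statement, formalizing the centered set $\psi_{\textsc{c}}(\Omega)$ only after the lemma.
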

\begin{proof}
$\bullet\:$ $G$ connected $\Rightarrow$ $X$ is bounded. We show first that $\|x_i-x_j\|$ is bounded for any pair $(x_i,x_j)$ if $(i,j) \in E$. If $m = \min_X \: \phi(X,\Omega)$, we have
\[
 \phi(X,\Omega) = \sum_{i \sim j} \omega_{ij} \left(\|x_i-x_j\|^2-d_{ij}^2\right)^2 = m.
\]
Hence,
\[
 \forall \: i \sim j, \quad \omega_{ij} \left(\|x_i-x_j\|^2-d_{ij}^2\right)^2 \leq m
\]
and
\[
 \|x_i-x_j\|^2 \leq d_{ij}^2 + \frac{m}{\omega_{ij}}.
\]
Let $\delta^2= \max\; d_{ij}^2$ and $\omega = \min \: \omega_{ij}$. Then,
\[
 \forall i \sim j, \quad \|x_i-x_j\|^2 \leq \delta^2 + \frac{m}{\omega}.
\]
Now, let $i,j$ with $i \nsim j$. As $G$ is connected, there is a path $ia_1\ldots a_pj$ linking $i$ with $j$. By triangular inequality, $\|x_i-x_j\| \leq \sum_{k=0}^p \|x_{a_k}-x_{a_{k+1}}\|$ (with $a_0=i$ and $a_{p+1}=j$). As $G$ is finite, it has a diameter $D$, and $\|x_i-x_j\| \leq D\left(\delta^2 + \frac{m}{\omega}\right)$. If there is a $M$ such that for any pair $i,j$ $\|x_i-x_j\| \leq M$ and $\sum_ix_i=0$, then $X$ is bounded.\\
\\
$\bullet\:$ $G$ not connected $\Rightarrow$ $X$ is not bounded. If $G$ is not connected, there are at least two subsets $A,B$ of vertices without connection between $A$ and $B$. We have
\[
 \phi(X,\Omega) = \sum_{i,j \in A} \omega_{ij} \left(\|x_i-x_j\|^2-d_{ij}^2\right)^2 + \sum_{k,m \in B} \omega_{km} \left(\|x_k-x_m\|^2-d_{km}^2\right)^2.
\]
Let us now set 
\[
 x'_i = x_i + \frac{h}{|A|}, \qquad x'_j = x_j - \frac{h}{|B|}.
\]
We still have $\sum_ix'_i + \sum_jx'_j=0$ and, $\forall \: h$, $\phi(X',\Omega)=\phi(X,\Omega)=m$. If $h \rightarrow + \infty$, $X$ is unbounded\\
\\
$\bullet\:$ As a consequence, $X$ is bounded if, and only if, $G$ is connected.
\end{proof}
\begin{flushright}
$\blacksquare$ 
\end{flushright}

\noindent Then, the set of point clouds $X$ solution of (\ref{pb:nlm:w}) for a weight matrix with a connected associated graph is a metric space, with Hausdorff distance. We then consider the subsets $\psi(\Omega)$, running over the matrices $\Omega$ fulfilling connectedness condition. We call $X$ a solution of (\ref{pb:nlm:w}), and $\psi(\Omega)$ the set of solutions. $\psi(\Omega)$ is closed as it is the pre-image of $\{m\}$, the minimum of $\phi$. But it is not bounded. Indeed, $\psi(\Omega)$ is invariant by any isometry in $\R^k$. A translation is an isometry, and the distance between two solutions $X,X'$ such that $X'=h+X$ with $h \in \R^k$ is $\|h\|$. Then, $\psi(\Omega)$ is unbounded. Therefore, we impose to each solutions $X \in \psi(\Omega)$ to be centered, and consider
\[
 \psi_\textsc{c}(\Omega) = \left\{ X \in \psi(\Omega) \: : \: \sum_ix_i = 0 \right\}.
\]
Then $\psi_\textsc{c}(\Omega)$ is bounded. Being a closed and bounded subset of a metric space, it is a compact set. We can use the Hausdorff distance denoted $d_\textsc{h}$ between $\psi_\textsc{c}(\Omega)$ and $\psi_\textsc{c}(\Omega')$, and continuity of the solution of (\ref{pb:nlm:w}) is defined as
\begin{equation}
\forall \: \epsilon >0, \quad \exists \: \eta > 0 \: : \: |\Omega'-\Omega|< \eta \quad \Longrightarrow \quad d_\textsc{h}(\psi(\Omega'),d(\Omega)) < \epsilon.
\end{equation}
It is reasonable to assume that for any pair of points $0 \leq \omega_{ij} \leq 1$. The set of such weight matrices is homeomorphic to the hypercube $[0,1]^N$ with $N=\frac{n(n-1)}{2}$. The weight matrices associated to DGP are boolean. They correspond to vertices of the hypercube.

\section{Continuity and rigidity}

Let us denote by $\mathcal{W}$ the set of weight matrices $\Omega$ such that the associated graph $G=(V,E)$ is connected, and by $\mathcal{X}$ the set of all centered point clouds $X$ with $n$ points. Then, $\psi(\Omega)$ is a compact subset of $\mathcal{X}$.\\
\\
Let $\Omega \in \mathcal{W}$ be the weight matrix of a DGP, i.e. there is a connected graph $G=(V,E)$ such that $\omega(i,j) = 1$ if $(i,j) \in E$ and $\omega(i,j)=0$ if $(i,j) \notin E$. We show here that the continuity of $\psi$ at $\Omega$ is linked with the rigidity of the framework associated to the graph $G$. A framework is a set of $n$ points in $\R^k$ solution of (\ref{eq:dgp}). It is rigid if it is defined up to an isometry (i.e. the set of known distances is sufficient to derive all other distances in a unique way). Otherwise, it is said flexible (see \cite[sect.~1.1.4]{LLMM14} for the definitions, and section 4.2 of same reference for a thorough discussion on those notions). We show here
\begin{lemma}
The set of weight matrices for which any realization in $\R^k$ is a rigid framework is not closed. 
\end{lemma}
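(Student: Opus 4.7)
The plan is to exhibit an explicit sequence $(\Omega_n)$ lying in the set, but converging to a DGP weight matrix $\Omega$ outside the set. I would work in $\R^k$ with $k=2$ and $n=4$, taking $d$ to be the distance matrix of the unit square: $d_{12}=d_{23}=d_{34}=d_{41}=1$ and $d_{13}=d_{24}=\sqrt{2}$.

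The candidate limit $\Omega$ is the boolean DGP weight matrix of the four-cycle $C_4$, with $\omega_{ij}=1$ on the four sides and $\omega_{ij}=0$ on the two diagonals. Its associated graph is connected, so $\Omega\in\mathcal{W}$ and the previous lemma applies. However, $\psi(\Omega)$ contains the classical one-parameter family of mutually non-congruent rhombi of side $1$; that framework is flexible, and consequently $\Omega$ does \emph{not} belong to the set under study.

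For the approximating sequence I would take $\omega_{ij}^{(n)}=1$ on the four sides and $\omega_{ij}^{(n)}=1/n$ on the two diagonals, so that $\Omega_n\to\Omega$ (say in Frobenius norm) as $n\to\infty$. Since the unit square realizes all six distances exactly, $\phi$ vanishes at that configuration, hence $\min_X \phi(X,\Omega_n)=0$. Because every entry of $\Omega_n$ is strictly positive, any $X\in\psi(\Omega_n)$ must satisfy $\|x_i-x_j\|=d_{ij}$ for \emph{every} pair $(i,j)$, not only for the sides. Up to isometry this pins $X$ down to the unit square, and the corresponding framework is a realization of $K_4$ with the full distance matrix prescribed, hence (globally) rigid in $\R^2$. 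Therefore each $\Omega_n$ does lie in the set.

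The step I expect to require the most care is the implication ``every minimizer of $\phi(\cdot,\Omega_n)$ is a rigid framework'', as rigidity must hold for \emph{all} elements of $\psi(\Omega_n)$, not merely for the one that is exhibited. The strict positivity of all weights together with $\min\phi=0$ reduces the minimization to an exact realization problem for the complete graph on four vertices in the plane, whose solution set is a single congruence class and which is rigid. Combining these observations yields a sequence $(\Omega_n)\to\Omega$ entirely inside the set whose limit lies outside of it, so the set is not closed.
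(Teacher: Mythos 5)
Your proof is correct and takes essentially the same approach as the paper: exhibit an explicit sequence of weight matrices with strictly positive entries (which, since the minimum of $\phi$ is zero, force exact realization of \emph{all} pairwise distances and hence rigidity) converging to a boolean matrix whose DGP framework is flexible. The paper's example is even smaller --- three points in $\R^2$ forming a right triangle, with the weight on one edge equal to $\eta \rightarrow 0$, so the limit framework is two unit bars hinged at a common point --- whereas you use the unit square degenerating to the flexible four-cycle (rhombus linkage); the structure of the argument is the same.
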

\begin{proof}
For this, it suffices to exhibit a sequence $(\Omega_\eta)_\eta$ for which for any $\eta > 0$ DGP solution is a rigid framework which converges to a weight matrix $\Omega=\Omega_0$ for which the solution is flexible. Let us consider $n=3$ and $k=2$ (i.e. 3 points in $\R^2$) and
\[
 D = 
 \begin{pmatrix} 
 0 & 1 & 1 \\ 
 1 & 0 & \sqrt{2} \\ 
 1 & \sqrt{2} & 0 
 \end{pmatrix}
 \qquad \mbox{and} \qquad 
 \Omega_\eta = 
 \begin{pmatrix}
   0 & 1 & 1 \\
   1 & 0 & \eta \\
   1 & \eta & 1
 \end{pmatrix}
 , \qquad 0 \leq \eta \leq 1
\]
where $D$ is the pairwise distance matrix between points. A realization is
\[
 X' = 
 \begin{pmatrix}
  x = & 0 & 0 \\
  y = & 0 & 1 \\
  z = & 1 & 0
 \end{pmatrix}
\]
If $\eta>0$, the realization is defined up to an isometry: the lengths of the edges of the triangle made by $X'$ are known, and this fixes the triangle up to an isometry. Whereas if $\eta=0$,  $\psi(\Omega_0)$ is the set of points $x,y,z \in \R^2$ such that $\|x-y\|=\|x-z\|=1$, which is isomorphic to $\mathbb{O} \times \mathbb{O}$ where $\mathbb{O}$ is the group of rotations in $\R^2$ ($y$ and $z$ are on the circle of center $x$ and radius 1). At $\eta=0$, $\Omega_\eta$ is flexible.
\end{proof}
\begin{flushright}
$\blacksquare$ 
\end{flushright}

As a consequence, $\psi$ is not continuous at $\Omega=\Omega_0$. Indeed, 
\[
 X = 
 \begin{pmatrix}
   0 & 0 \\
   1 & 0 \\
   -1 & 0
 \end{pmatrix}
 \in \psi(\Omega)
\]
and there is a constant $C > 0$ (the calculation is easy, and omitted here) such that
\[
 \forall \: \eta > 0, \quad \forall \: X' \in \psi(\Omega_\eta), \quad d(X,X') > C
\]
whereas $|\Omega-\Omega_\eta|=\eta$. $X$ and $X'$ are incongruent.

%
\section{Convergence to a Heaviside function}
%

We first give succinctly a motivation for next short study of a given family of weight matrices. Molecular based taxonomy consists in assigning specimens to species based on similarities between some relevant DNA sequences, called markers \cite{Hillis96}. Distances between sequences are computed with classical algorithms (see \cite{Gusfield1997}). This dictionary between morphological based and molecular based taxonomy works up to a given threshold of distance (beyond this threshold, computed genetic distances are highly likely to be blurred). Thus, as a simplified setting, distances up to a given threshold only are  taken into account. We have a set of pairwise distances, and we wish to build an Euclidean image of it that is as accurate as possible, ignoring the distances beyond a given threshold. Provided a dimension $k$ is given, this can be formalized as a DGP problem, where in $(V,d)$ distances below a given threshold $\theta$ only are given, i.e. $G$ is defined by
\[
E = \{(i,j) \: : \: d(i,j) \leq \theta\}.
\]
The weights are given by a Heaviside function $w(d)=H(\theta-d)$. Such a function has a discontinuity at $d=\theta$. We can consider the weight function
\begin{equation}\label{eq:weights:tanh}
 \omega_{a,\theta}(d) = \frac{1-\tanh(a(d-\theta))}{2}
\end{equation}
as well because distances are blurred progressively. We have
\[
 \lim_{a \rightarrow \infty} \: \omega_{a,\theta}(d) = H(\theta-d)
\]
with the topology of uniform convergence
\[
 \forall \: \epsilon > 0, \quad \exists \: \alpha \in \R^+ \: : \: \forall \: d > 0, \quad \forall \: a > \alpha, \quad |\omega_{a,\theta}(d) - H(\theta-d)| < \epsilon.
\]
Let us assume that $\theta$ is fixed. We are given a converging family of weight matrices $(\Omega_a)_{a\in \R^+}$ satisfying 
\[
 \Omega_\infty = \lim_{a \rightarrow \infty} \Omega_a.
\]
Then, if $\psi$ is continuous at $\Omega_\infty$, one can define
\[
 \psi(\Omega_\infty) = \lim_{a \rightarrow \infty}\: \psi(\Omega_a).
\]
As $\Omega_\infty$ is the weight matrix of a DGP, i.e. there is a graph $G=(V,E)$ such that $w_{\infty, ij}=1$ for $(i,j) \in E$ and $w_{\infty, ij}=0$ for $(i,j) \notin E$, this means that the solution of DGP can be found as the limit of a family of solutions of NLM, when $a \rightarrow \infty$.\\
\\
Let us make a few remarks on possible frameworks to address such a problem. We call \emph{Euclidean image} a solution of problem (\ref{pb:nlm:w}).
\begin{itemize}
 \item the dimension $k$ is not fixed by the problem itself. Hence, this problem can be rephrased as EMDCP.
 \item It is however interesting to have a Euclidean image in a low dimensional space where shapes of point clouds can be better studied (e.g. $(k \leq 3)$)
 \item In order to build such a solution, we can select a large $k$, and project it in an optimal way in a low dimensional space (by Principal Component Analysis)
 \item The choice to ignore distances beyond a given threshold is similar to the Isomap procedure, which is a manifold learning technique using graph based distances to approximate geodesic distances on a manifold \cite{Tennenbaum2000,Lee2007}. A link between Isomap and solution of a DGP has recently been made in \cite{Liberti2017}.
\end{itemize}
Here, we fix a dimension $k$ (in our example, $k=2$) and study the convergence of NLM solutions to a solution of DGP.  We focus on the numerical stability of such an approach. Indeed, for a given $\Omega$, the manifold $\mathscr{M}$ in $\R^{n \times k} \times \R$ defined by
\[
 \mathscr{M} = \left\{ (X,z) \in \R^{n \times k} \times \R \quad \mbox{s.t.} \quad z = \phi(X,\Omega)\right\}
\]
is far from having a unique minimum on $z$. The set $(\phi^{-1}(m),m)$ when $m$ is the minimum of $\phi$ is a minimal submanifold (all its points have a minimum "elevation" $z$). It is likely that many other local minima exist, as well as "flat valleys". We study here whether the ill behavior of $\mathscr{M}$ may hamper to use the continuity of the solution between NLM and DGP as a way to obtain a numerical solution to DGP. We mention that there exist several efficient methods to solve DGP by numerical optimization, like Difference on Convex functions programming \cite{LeThi2013}, Distance Continuation \cite{More1997} or Isomap-based heuristics \cite{Liberti2017}. We are interested here in the continuity between solutions of NLM and DGP on continuous families of weight matrices $\Omega$.

\section{Numerical optimization schemes}

We have implemented a numerical optimization scheme for solving NLM problem problem with varying weights (defined in equation  \eqref{pb:nlm:w}) with a Heaviside function for the weights (see \eqref{eq:weights:tanh}) for $k=2$, with two methods known to be efficient to find global minima: \texttt{BFGS} and \texttt{basin hopping} \cite{Wales1997} in package \texttt{scipy.optimize}. As distances are unreliable beyond a given threshold, we have adopted the framework of isomap-based heuristics \cite{Liberti2017}. We have built a data set in silico, that we wish to recover (it should be the solution of DGP and NLM problems). The objective is to test the continuity of the numerical solution. The dataset consists of $n=100$ points in $\R^2$ randomly (uniform law) distributed within the ring delimited by circles $r=0.8$ and $r=1$ centered at origin (see FIGURE \ref{fig:shapes} top left). We have selected the 
weight function as defined in equation (\ref{eq:weights:tanh}) with different values of stiffness coefficient $a$ and threshold $\theta$. The numerical results for various values of $a$ and $\theta$ are presented in TABLE \ref{tab:ann} for both BFGS and Basin Hoping. Each cell contains the value of the cost function for the result of the procedure for one value of $a$, one of $\theta$ and one method. The starting point for NLM optimization phase could be the result of Multidimensional Scaling. But this requires the knowledge of all distances. In our case, we wish to avoid the use of distances larger than a given threshold $\theta$ (even if they have been measured). Therefore, the starting point is the point cloud built on distances as computed by Isomap on the partial distance matrix of distances $d \leq \theta$. This yields the following observations:
\begin{itemize}
 \item The cost function for \texttt{Isomap + BFGS} is always equal to or lower than the cost function for \texttt{Isomap + basin hoping}. For this type of problem, \texttt{Isomap + BFGS} is recommended (we have tested other methods, results not shown, like simulated annealing, for which results were worse).
 \item The cost function decreases when $\theta$ increases for a given $a$, and is less sensitive to $\theta$. If $\theta$ is too small, there is a significative probability that the graph of partial distanes is not connected. When $\theta$ and $a$ are low, the optimization step may not converge.
\end{itemize}

\noindent A picture of the datasets obtained by each method (\texttt{Isomap}, \texttt{Isomap + BFGS}, \texttt{Isomap + basin hopping}) is displayed in FIGURE \ref{fig:shapes}.  The eye can recognize a deformed ring in the bottom right graph, namely the best reconstruction with basin hopping. The optimization scheme has been trapped on this pattern. One is tempted to twist the outer small loop to recover a shape close to a ring. The fact that this twist (ouwards like here, or inwards in some other simulations) of a fraction of the ring is a trap can be heuristically understood: the main discrepancy between exact distances and reconstructed distances is for those points which have been twisted outward. They are much closer to the points on the opposite on the ring on the reconstruction than in the initial data set (top left). However, the stiffness of the decrease of the weights for  $d \geq 1$ lets the cost function $\phi$ to be nearly insensitive to those discrepancies. The role of stiffness $a$ is then more important than the role of threshold $\theta$ as the weight of pairs of points separated by a large distance (like $(\|x_i-x_j\| \geq 1$) is annihilated by a very low weight ($\omega_{a,\theta}(1)= 6.7\, 10^{-3}$ for $a=5$ and $\theta=0.75$). Careful observation of many simulations lead to the observation that similar low cost function values may correspond to very different geometric settings for the solution.

\section{Conclusion}

We have set a framework to study continuity of the solution of  NLM (see \eqref{pb:nlm:w}) when the weight matrix $\Omega$ varies. The main point to address is that a solution to NLM is never a single point cloud, but always a set of point clouds, union of orbits of the action of the group of isometries on $\R^k$. We have exhibited an example where the solution is not continuous at a weight matrix where the realization of the solution is not rigid. We expect that it is continuous when realizations are rigid, but this has not been shown here. This is deferred to further work on study of the topological structure of the solutions in relation with the weight matrices. We have linked DGP and NLM in a common framework using this continuity. We have studied whether the continuity of the solution can serve as a basis for ensuring the continuity of numerical solutions when a parameter varies in the weight matrix. Therefore, we have built a simple \emph{in silico} dataset, and derived a procedure with the output of Isomap as initial point for optimization step in NLM, with two optimization schemes (BFGS and Basin Hopping). We have produced good hints to show convergence of NLM solution to DGP solution (a situation when $a \rightarrow \infty$). We have shown as well that different geometric settings of the solution may correspond to similar very low values of the cost function. It is likely that this is due to the complicated shape of the manifold $\mathscr{M}$ of cost function as function of coordinates of a point cloud, and motivates further studies.

\begin{figure}[ht]
\center
\includegraphics[scale=.5]{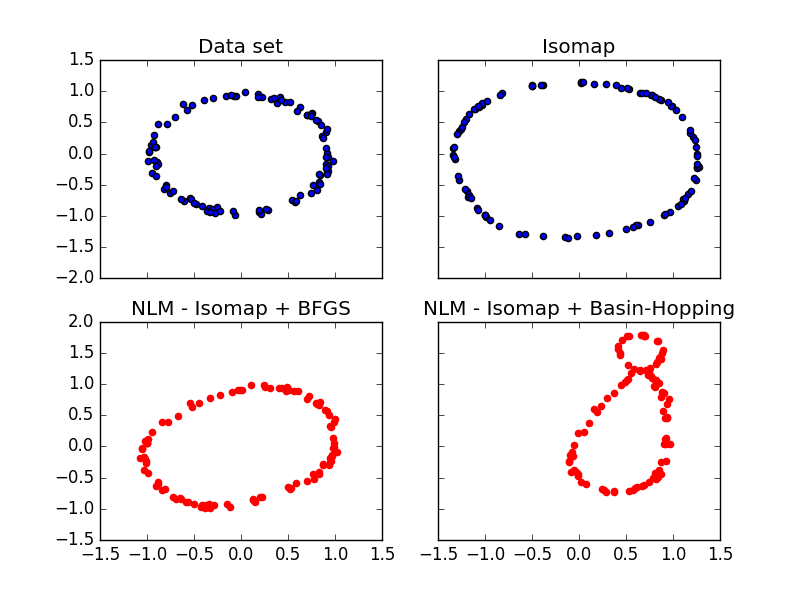}
\caption{Results of numerical simulations for reconstructing the point cloud knowing pairwise distances lower than a given threshold only. Parameters: $\theta = 0.5$ and $a=10$ (see text). Cost function for BFGS is $\phi= 1.018$ and for Basin Hopping $\phi = 2.211$. Top left: Simulated dataset. Top right: dataset reconstructed with standard Isomap procedure. Bottom: dataset reconstructed with BFGS optimization scheme (left) or Basin Hopping (right) with top right dataset as initial values.}
\label{fig:shapes}
\end{figure}

\begin{table}
\begin{small}
\begin{tabular}{c|rcrc|rcrc|rcrc|rcrc}
\hline
	        & $a=5$  &&        && 	$a=10$   &&        && $a=20$            &&                && $a=50$         && 	   \\
	        & BFGS   &$\:$& BH && BFGS       &$\:$& BH && BFGS              &$\:$& BH         && BFGS           &$\:$& BH     \\ 
\hline	        
$\theta=1$      & 94.82  && 94.82  && 0.848      && 10.82  && $9.45 \, 10^{-3}$ && $9.45 10^{-3}$ && $1.83 10^{-3}$ && $1.83 10^{-3}$ \\
$\theta =0.5$   & 55.42  && 55.42  && 1.018      && 2.212  && $1.58 \, 10^{-4}$ && $1.03 10^{-2}$ && $2.85 10^{-5}$ && $4.19 10 ^{-3}$\\
$\theta = 0.25$ &  	 &&        && 0.672      && 0.672  && $1.9  \, 10^{-4}$ && $8.60 10^{-4}$ && $2.12 10^{-5}$ && $1.24 10^{-4}$ \\
\hline
\end{tabular}
\end{small}
\caption{Result of nonlinear mapping: cost function for the reconstruction of an \emph{in silico} dataset, for various values of parameters $a$ and $\theta$, and two methods for optimization phase in NLM: BFGS (for BFGS method) and BH (for Basin Hoping method). The starting point of optimization is the point cloud built by standard Isomap procedure.}\label{tab:ann} 
\end{table}

%
%


\newpage

\bibliographystyle{alpha}
\newcommand{\etalchar}[1]{$^{#1}$}

\end{document}